\DeclarePairedDelimiter{\abs}{\lvert}{\rvert}
\theoremstyle{definition}
\newtheorem{thm}{Theorem}
\newtheorem{lem}{Lemma}
\newtheorem{define}{Definition}
\newcommand{\cov}{\textrm{Cov}}
\newcommand{\no}{\nonumber}
\begin{document}
	\title{Asymptotic Loss in Privacy due to Dependency in Gaussian Traces}	
\author{\IEEEauthorblockN{Nazanin Takbiri}
	\IEEEauthorblockA{Electrical and\\Computer Engineering\\
		UMass-Amherst\\
		ntakbiri@umass.edu}
	\and
	\IEEEauthorblockN{Ramin Soltani}
	\IEEEauthorblockA{Electrical and\\Computer Engineering\\
		UMass-Amherst\\
		soltani@ecs.umass.edu}
	\and
	\IEEEauthorblockN{Dennis L. Goeckel}
	\IEEEauthorblockA{Electrical and\\Computer Engineering\\
		UMass-Amherst\\
		goeckel@ecs.umass.edu}
	\and
	\IEEEauthorblockN{Amir Houmansadr}
	\IEEEauthorblockA{Information and \\Computer Sciences\\
		UMass-Amherst\\
		amir@cs.umass.edu}
	\and
	\IEEEauthorblockN{Hossein Pishro-Nik}
	\IEEEauthorblockA{Electrical and\\Computer Engineering\\
		UMass-Amherst\\
		pishro@ecs.umass.edu\thanks{This work was supported by National Science Foundation under grants CCF--1421957 and CNS-1739462.}}
}
\maketitle

\begin{abstract}	
The rapid growth of the Internet of Things (IoT) necessitates employing privacy-preserving techniques to protect users' sensitive information. Even when user traces are anonymized, statistical matching can be employed to infer sensitive information. In our previous work, we have established the privacy requirements for the case that the user traces are instantiations of discrete random variables and the adversary knows only the structure of the dependency graph, i.e., whether each pair of users is connected. In this paper, we consider the case where data traces are instantiations of Gaussian random variables and the adversary knows not only the structure of the graph but also the pairwise correlation coefficients.  We establish the requirements on anonymization to thwart such statistical matching, which demonstrate the significant degree to which knowledge of the pairwise correlation coefficients further significantly aids the adversary in breaking user anonymity.
\end{abstract}

\begin{IEEEkeywords}
Anonymization, information theoretic privacy, inter-user dependency, Internet of Things (IoT), Privacy-Protection Mechanisms (PPM).
\end{IEEEkeywords}


\section{Introduction}
\label{intro}
\IEEEPARstart{T}{he} Internet of Things (IoT) enables users to share and access information on a large scale and provides many benefits to individuals (e.g., smart homes, healthcare) and industries (e.g., digital tracking, data collection, disaster management). However, such benefits are provided by tuning the system to user characteristics based on potentially sensitive information about their activities. Thus, the use of IoT comes with a significant threat to users' privacy: leakage of sensitive information. 

Two main privacy-preserving techniques are anonymization~\cite{ma2009location,shokri2011quantifying2,soltani2017towards} and obfuscation~\cite{duckham2005formal,ardagna2011obfuscation}, where the former is hiding the mapping between data and the users by replacing the identification fields of users with pseudonyms, and the latter is perturbing the user data such that the adversary observes false but plausible data. Although these methods have been addressed widely, statistical inference methods can be applied to them to break the privacy of the users~\cite{adam_twireless,3ukil2014iot, FTC2015,0Quest2016}. Furthermore, achieving privacy using these methods comes with a cost: reducing the utility of the system for the users. Hence, it is crucial to consider the trade-off between privacy and utility when employing privacy-preserving techniques, and to seek to achieve privacy with minimal loss of functionality and usability~\cite{loukides2012utility,lee2017utility,batet2013utility}.
Despite the growing interest in IoT privacy~\cite{ukil2014iot,lin2016iot}, previous works do not offer theoretical guarantees on the trade-off between privacy and utility. The works of Shokri et al.~\cite{shokri2011quantifying,shokri2011quantifying2,shokri2012protecting} and Ma et al.~\cite{ma2009location} provide significant advances in the quantitative analyses of privacy; however, in contrast to these prior works, we take a foundational approach to understand the theoretical limits.

In \cite{nazanin_ISIT2017,ISIT17-longversion1, KeConferance2017,ciss2017}, the data traces of different users are modeled as independent, and the asymptotic limits of user privacy are presented for the case when both anonymization and obfuscation are applied to users' time series of data. In ~\cite{nazanin_ISIT2017,ISIT17-longversion1}, each user's data trace is governed by: 1) independent identically distributed (i.i.d.) samples of a Multinoulli distribution (generalized Bernoulli distribution) with $r$ possibilities for each data point.; or, 2) Markov chain samples of a Multinoulli distribution, where each user's data
samples are governed by a Markov chain with $r$ states. In ~\cite{KeConferance2017}, the case of independent users with Gaussian traces was addressed. 
However, the data traces of different users are dependent in many applications (e.g,. friends, relatives), and the adversary can potentially exploit such. In~\cite{nazanin_ISIT2018, ISIT18-longversion}, we extended the results of \cite{nazanin_ISIT2017,ISIT17-longversion1} to the case where users are dependent and the adversary knows only the structure of the association graph, i.e., whether each pair of users are linked. As expected, the knowledge of the dependency graph results in a significant degradation in privacy \cite{nazanin_ISIT2017,ISIT17-longversion1}.

In this paper, we turn our attention to the case where the trace of each user consists of identically distributed Gaussian random variables that are independent over time, but there is dependency between the samples of different users at each point in time. The adversary knows not just the dependency graph but also the degree to which the data of different users are correlated, and thus the adversary knows the joint probability distribution function (PDF) of the data generated by all of the users. To preserve the privacy of users, anonymization is employed, i.e., the mapping between users and data sequences is randomly permuted for each set of $m$ consequent users' data. We derive the minimum number $m$ for the adversary's observations per user that ensures privacy, with respect to the number $n$ of users and the size $s$ of the sub-graph to which the user belongs. 

The rest of the paper is organized as follows. In Section~\ref{sec:framework}, we present the framework: system model, metrics, and definitions. Then, we present the construction and analysis in Section~\ref{anon}. In Section~\ref{discussion}, we discuss how inter-user dependency affects system privacy, and in Section~\ref{conclusion}, we conclude from the results.
\section{Framework}
\label{sec:framework}
Consider a system with $n$ users. Denote by $X_u(k)$ the data point of user $u$ at time $k$, and by $\textbf{X}_u$ the $m \times 1$ vector containing the data points of user $u$, 
\[ \textbf{X}_u =\left[{X}_{u}(1), {X}_{u}(2), \cdots, {X}_{u}(m)\right]^T, \ \ \ \ u\in \{1, 2, \cdots, n\}.\]

To preserve the privacy of the users, anonymization is employed, with pseudonyms that are changed every $m$ samples, i.e., the mapping between users and data sequences is randomly permuted every $m$ samples. As shown in Figure \ref{fig:xyz}, denote by ${Y}_u(k)$ the output of the anonymizer, which we term as the ``reported data point'' of user $u$ at time $k$. The permuted version of $\textbf{X}_u$ is
\[\textbf{Y}_u =\left[{Y}_{u}(1), {Y}_{u}(2), \cdots, {Y}_{u}(m)\right]^T, \ \ \ \ u\in \{1, 2, \cdots, n\},\]
\begin{figure}[h]
	\centering
	\includegraphics[width =0.6\linewidth]{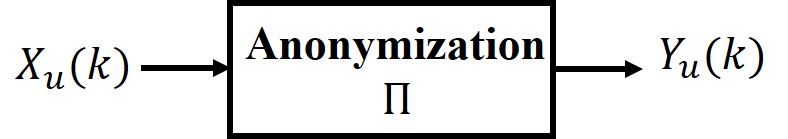}
	\caption{Applying anonymization to the data point of user $u$ at time $k$. $X_u(k)$ denotes the actual data point of user $u$ at time $k$, and $Y_u(k)$ denotes the reported data point of user $u$ at time $k$.}
	\label{fig:xyz}
\end{figure}
The anonymization technique can be modeled by a random permutation function $(\Pi)$ on the set of $n$ users. Then,
$\textbf{Y}_u=\textbf{X}_{\Pi^{-1}(u)}$, $ \textbf{Y}_{\Pi(u)}=\textbf{X}_{u}.$

There exists an adversary who wishes to break the anonymity and thus privacy of the users. He observes $\textbf{Y}_1,\textbf{Y}_2,\ldots,\textbf{Y}_n$ which are the reported data points of $n$ users at times $1,2,\ldots,m$, and performs statistical analysis to estimate the users' actual data points. 
\subsection{Models and Metrics}
{\textit{Data Points Model:}}
Data points are independent and identically distributed (i.i.d.) with respect to time, i.e., $\forall k,k' \in \{1,2,\ldots,m\}$, $k \neq k'$, ${X}_{u}(k)$ is independent of ${X}_{u}(k')$. At time $k$, the vector of user data points is drawn from a multivariate normal distribution; that is, 
$$[X_1(k), X_2(k),\ldots,X_n(k)] \sim \mathcal{N} \left(\boldsymbol{\mu}, \boldsymbol{\Sigma}\right),$$
where $\boldsymbol{\mu}=[\mu_1, \mu_2, \ldots,\mu_n]$ is the mean vector, $\boldsymbol{\Sigma}$ is the $n \times n$ covariance matrix, $\Sigma_{u,u'}=\sigma^2_{uu'}=\mu_{uu'}-\mu_{u}\mu_{u'}$ is the covariance between users $u$ and $u'$, and the variances of the user data points are equal for all users ($\Sigma_{u,u'}=\sigma^2$). Following our previous work \cite{ISIT18-longversion}, the parameters of the distribution governing users' behavior are in turn drawn randomly. In particular, we assume the means $\mu_1, \mu_2, \cdots, \mu_n$ are finite and are drawn independently from a continuous distribution $f_\mu(x)$, where for all $x$ in the support of $f_\mu(x)$
\begin{align}
0 <\delta_{1} <f_\mu(x)<\delta_{2}< \infty,
\label{case1}
\end{align}
and the correlations $\mu_{uu'}, u=1,2, \cdots, n, u'=1,2, \cdots, n, u \neq u'$ are finite, and, when two users are correlated, are drawn independently from a continuous distribution $g_\mu(x)$, where for all $x$ in the support of $g_\mu(x)$
\begin{equation}
0<\delta_{1}\leq g_\mu(x) \leq\delta_{2} < \infty.
\label{case2}
\end{equation}
Although it will not affect our results, we note in passing that the Cauchy-Schwarz inequality shows that $\sigma^2$ implies an upper bound on the support of $g_{\mu}(x)$.

{\textit{Association Graph:}}
The dependencies between users are modeled by an association graph in which two users are connected if they are dependent. Denote by $G(V,E)$ the association graph where $V$ is the set of nodes $(|V|=n)$ and $E$ is the set of edges. Also, denote by $\rho_{uu'}=\frac{\mu_{uu'}-\mu_u \mu_{u'}}{\sigma^2}$ the correlation coefficient between users $u$ and $u'$. Observe
$(u,u') \in E \text{ iff } \rho_{uu'} \neq 0.$
Similar to~\cite{ISIT18-longversion}, the association graph consists of disjoint subgraphs $G_1(V_1, E_1), G_2(V_2, E_2), \ldots, G_f(V_f, E_f)$, where each subgraph $G_j$ is connected and refers to a group of ``friends'' or ``associates.'' Let $s_j$ denotes the number of nodes in $G_j(V_j, E_j)$, i.e., $s_j=|V_j|$.
\begin{figure}
	\centering
	\includegraphics[width=.4\linewidth]{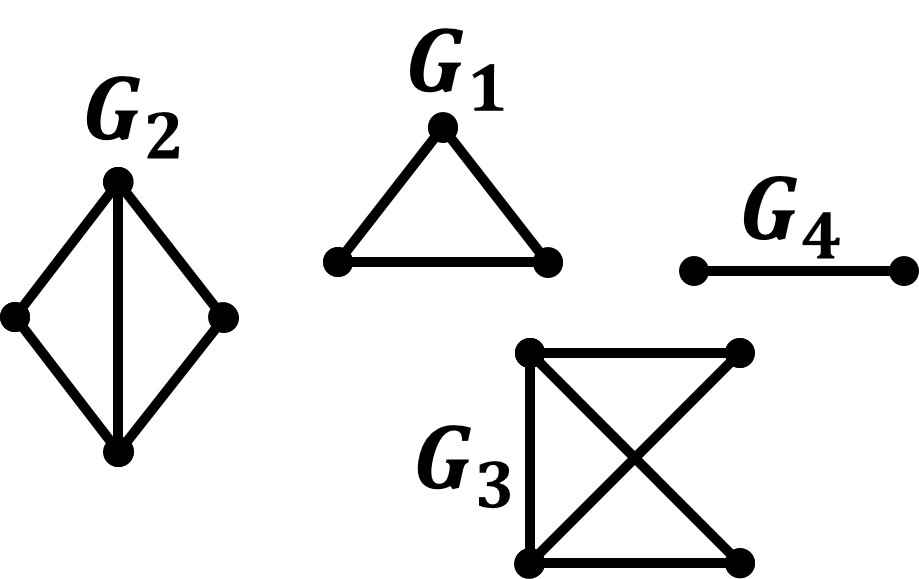}
	\centering
	\caption{The association graph consists of disjoint subgraphs $(G_j)$, where $G_j$ is a connected graph on $s_j$ vertices.}
	\label{fig:graph}
\end{figure}


{\textit{Adversary Model:}} 
The adversary knows the multivariate normal distribution from which the data points of users are drawn. Therefore, in contrast to~\cite{ISIT18-longversion}, the adversary knows both the structure of the association graph $G(\mathcal{V},E)$ as well as the correlation coefficients $(\rho_{uu'})$ for each pair of users $(u,u') \in E$. The adversary also knows the anonymization mechanism; however, they don't not know the realization of the random permutation function.

The situation in which the user has no privacy is defined as follows~\cite{nazanin_ISIT2017}:
\begin{define}
User $u$ has \emph{no privacy} at time $k$, if and only if there exists an algorithm for the adversary to estimate $X_u(k)$ perfectly as $n$ goes to infinity. In other words, as $n \rightarrow \infty$,
\begin{align}
\no \forall k\in \mathbb{N}, \ \ \ \mathbb{P}_e(u)\triangleq \mathbb{P}\left(\widetilde{X_u(k)} \neq X_u(k)\right)\rightarrow 0,
\end{align}
where $\widetilde{X_u(k)}$ is the estimated value of $X_u(k)$ by the adversary.
\end{define}

\section{Impact of Dependency on Privacy Using Anonymization}
\label{anon}
Here, we consider to what extent inter-user dependency limits privacy in the case users' data points are governed by a Gaussian distribution. 

The adversary knows the structure of the statistical dependency and seeks to use it to match users to their data sequences. First, the adversary must figure out the dependency in the data traces. Thus, we consider the ability of the adversary to fully reconstruct the structure of the association graph of the anonymized version of the data with arbitrarily small error probability.

\begin{lem}
If for any $\lambda>0$, the adversary obtains $m=n^{\lambda}$ anonymized observations, they can reconstruct $\widetilde{G}=\widetilde{G}(\widetilde{\mathcal{V}}, \widetilde{E})$, where $\widetilde{\mathcal{V}}=\{\Pi(u):u \in \mathcal{V}\}=\mathcal{V}$, such that with high probability, for all $u, u' \in \mathcal{V}$; $ (u,u')\in E$ iff $\left(\Pi(u),\Pi(u')\right)\in \widetilde{E}$. We write this statement as $\mathbb{P}(\widetilde{E}=E)\to 1$.
\label{lemma1}
\end{lem}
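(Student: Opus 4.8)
The plan is to have the adversary reconstruct the graph by estimating second-order statistics from the $m$ reported samples and thresholding them. Since a single permutation is used within each block of $m$ observations, the reported sequences $\mathbf{Y}_1,\dots,\mathbf{Y}_n$ are a fixed (but unknown) relabeling of the true sequences, so for any pair $(a,b)$ the empirical covariance
\[
\widehat{\sigma}^2_{ab}=\frac{1}{m}\sum_{k=1}^{m}\big(Y_a(k)-\overline{Y}_a\big)\big(Y_b(k)-\overline{Y}_b\big)
\]
is a consistent estimator of $\sigma^2_{\Pi^{-1}(a)\Pi^{-1}(b)}$, and the corresponding empirical correlation coefficient $\widehat{\rho}_{ab}=\widehat{\sigma}^2_{ab}/(\widehat{\sigma}_a\widehat{\sigma}_b)$ estimates $\rho_{\Pi^{-1}(a)\Pi^{-1}(b)}$. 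I would declare $(a,b)\in\widetilde{E}$ exactly when $|\widehat{\rho}_{ab}|$ exceeds a threshold $\gamma$ chosen strictly between $0$ and the smallest correlation magnitude that can appear on an edge. Because non-adjacent users are uncorrelated, hence independent, Gaussians while adjacent users have $|\rho_{uu'}|$ bounded away from $0$, this test separates the two cases once the estimator is sufficiently accurate.

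The core of the argument is a concentration bound for $\widehat{\rho}_{ab}$. Each summand is a product of two jointly Gaussian variables, so $\widehat{\sigma}^2_{ab}$ is an average of sub-exponential terms, and a Bernstein-type inequality gives $P\big(|\widehat{\rho}_{ab}-\rho_{\Pi^{-1}(a)\Pi^{-1}(b)}|>\epsilon\big)\le 2\exp(-c\,m\,\epsilon^2)$ for $\epsilon$ small and $\rho$ bounded away from $\pm1$. Applying this with $\epsilon=\gamma$ controls the per-pair false-positive probability (true $\rho=0$), and with $\epsilon=\rho_{\min}-\gamma$ it controls the per-pair false-negative probability (true $|\rho|\ge\rho_{\min}$). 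A union bound over the at most $\binom{n}{2}<n^{2}$ pairs then yields
\[
P(\widetilde{E}\neq E)\le 2n^{2}\exp\!\big(-c'\,m\,\gamma^{2}\big).
\]

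It remains to verify that the prescribed $m=n^{4/(s(s+1))+\alpha}$ drives this bound to $0$. Since $m$ grows polynomially in $n$ while the union-bound cost contributes only a $\log n$ term in the exponent, the product $m\,\gamma^{2}$ dominates $2\log n$ as soon as the detection margin $\gamma$ does not shrink faster than the resolution afforded by $m$ samples. I expect this last point to be the main obstacle: one must quantify, through the structure of the size-$s$ subgraph and the density bounds \eqref{case1}--\eqref{case2} on the drawn parameters, the smallest correlation gap that the reconstruction must resolve, and show that matching the $s(s+1)/2$ second-order parameters of a candidate subgraph to resolution $\Theta(m^{-1/2})$ is exactly what forces the exponent $4/(s(s+1))$. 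Carefully coupling the per-pair concentration to this subgraph-level matching, rather than treating edges in isolation, is where the quantitative work lies; the thresholding and union-bound steps are then routine.
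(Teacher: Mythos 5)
There is a genuine gap, and it sits exactly where your scheme departs from the model. Your threshold test requires a fixed margin $\gamma$ chosen ``strictly between $0$ and the smallest correlation magnitude that can appear on an edge,'' i.e., it presumes $|\rho_{uu'}|\ge\rho_{\min}>0$ uniformly over edges. The paper's model gives you no such $\rho_{\min}$: the second-order parameters $\mu_{uu'}$ (and the means $\mu_u$) are drawn from continuous densities bounded only as in \eqref{case1}--\eqref{case2}, so an edge merely has $\mu_{uu'}-\mu_u\mu_{u'}\neq 0$ almost surely. With $\Theta(n)$ edges drawn independently from a density bounded below by $\delta_1$, the minimum edge covariance magnitude tends to $0$ as $n\to\infty$, so for any fixed $\gamma>0$ some edges fall below your threshold with high probability and your test misclassifies them. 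This is why the paper does not threshold at a constant: it shows the empirical covariance $\widetilde{Cov_{uu'}}$ tracks $\mu_{uu'}-\mu_u\mu_{u'}$ to accuracy $O(m^{-1/4})$ (via Chebyshev with the high moment $\zeta=2s(s+1)$, controlled through the Marcinkiewicz--Zygmund and Jensen inequalities for the product terms, and Gaussian tails for the means), and then compares against the \emph{vanishing} threshold $m^{-1/5}$, which separates zero from nonzero covariances in the limit. Your Bernstein/sub-exponential concentration for averages of products of jointly Gaussian variables is itself correct and in fact stronger than the paper's polynomial-moment bound, but it cannot rescue a constant-threshold classifier; you would need to rerun your argument with $\epsilon$ and $\gamma$ shrinking with $m$, which is precisely the structure of the paper's proof.

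The second problem is that your closing paragraph defers the decisive step and misplaces it. You conjecture that the exponent $4/(s(s+1))$ arises because graph reconstruction must match the $s(s+1)/2$ second-order parameters of a size-$s$ subgraph to resolution $\Theta(m^{-1/2})$. That bookkeeping belongs to Lemma~\ref{lemma2} (identifying group $1$ among the $n/s$ groups by comparing the triangular arrays $\textbf{P}^{(l)}$), not to this lemma. For graph reconstruction alone, any polynomially growing $m$ suffices once the concentration beats the union bound over the $n^2$ pairs -- in the paper this is arranged by taking $\zeta$ large enough -- so the specific exponent in the hypothesis is simply inherited from the later steps rather than forced here. As written, your proposal both rests on an assumption ($\rho_{\min}>0$) that contradicts the parameter model and leaves unproven the one step you correctly identify as carrying the quantitative content.
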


\begin{proof}
From the observations, the adversary can calculate the empirical covariance for each pair of user $u$ and user $u'$, 
\begin{align}
\widetilde{\cov_{uu'}}&=\frac{S_{uu'}}{m}-\frac{S_{u}}{m}\frac{S_{u'}}{m},
\label{cov}
\end{align}
where
\begin{align}
	S_{u}=\sum_{k=1}^{m}X_u(k),\ \ S_{uu'}&=\sum_{k=1}^{m}X_u(k)X_{u'}(k).
	\label{S}
\end{align}	
We claim for $m=n^{\lambda}$, and large enough $n$,
\begin{itemize}
	\item $|\widetilde{\cov_{uu'}}|\leq{m^{-\frac{1}{5}}}$, iff $(u,u')\notin \widetilde{E},$
	\item $|\widetilde{\cov_{uu'}}|>{m^{-\frac{1}{5}}}$, iff $(u,u')\in \widetilde{E},$
\end{itemize}
In other words, we show $P(\widetilde{E}=E)\to 1$ as $n \to \infty$.

Now, define
\begin{align}
\no \mathcal{J}_{uu'}=\bigg\{\bigg{|}\widetilde{\cov_{uu'}}-\left(\mu_{uu'}-\mu_u\mu_{u'}\right)\bigg{|}\geq 2\theta\bigg\};
\end{align}
thus, we have
\begin{align}
\no \mathbb{P}\left(\mathcal{J}_{uu'}\right)&=\mathbb{P}\left(\bigg{|}\left(\frac{S_{uu'}}{m}-\mu_{uu'}\right)- \left(\frac{S_{u}}{m}\frac{S_{u'}}{m}-\mu_{u}\mu_{u'}\right)\bigg{|}\geq 2\theta\right)\\
\no &\hspace{-0.4 in}\leq\mathbb{P}\left(\bigg{|}\frac{S_{uu'}}{m}-\mu_{uu'}\bigg{|}+ \bigg{|}\frac{S_{u}}{m}\frac{S_{u'}}{m}-\mu_{u}\mu_{u'}\bigg{|}\geq 2\theta\right)\\
\no &\hspace{-0.4in}\leq \mathbb{P}\left(\bigg{\{}\bigg{|}\frac{S_{uu'}}{m}-\mu_{uu'}\bigg{|}\geq \theta\bigg{\}} \bigcup \bigg{\{}\bigg{|}\frac{S_{u}}{m}\frac{S_{u'}}{m}-\mu_{u}\mu_{u'}\bigg{|}\geq \theta\bigg{\}} \right)\\
&\hspace{-0.4 in}\leq \mathbb{P}\left(\bigg{|}\frac{S_{uu'}}{m}-\mu_{uu'}\bigg{|}\geq \theta \right)+\mathbb{P}\left( \bigg{|}\frac{S_{u}}{m}\frac{S_{u'}}{m}-\mu_{u}\mu_{u'}\bigg{|}\geq \theta\right),\ \
\label{J_uu'}
\end{align}
where the first inequality follows from the fact that $\abs{a-b} \leq \abs{a}+\abs{b}$, and as a result, $\mathbb{P}\left(\ \abs{a-b} \geq 2\theta\right) \leq \mathbb{P}\left(\ \abs{a}+\abs{b} \geq 2\theta\right)$. The union bound yields the third inequality.

Note that we have
\begin{align}
\nonumber \mathbb{P}\left(\bigg{|}\frac{S_{uu'}}{m}-\mu_{uu'}\bigg{|}\geq \theta\bigg{|}\right)&\leq \frac{\mathbb{E}\left[\left(\sum_{k=1}^{m}\left(X_u(k)X_{u'}(k)-\mu_{uu'}\right)\right)^{\zeta}\right]}{\theta^{\zeta}m^{\zeta}},\\
 \nonumber &\hspace{-0.3 in}\leq \frac{\tau \mathbb{E}\left[\left({\sum_{k=1}^{m}\left(X_u(k)X_{u'}(k)-\mu_{uu'}\right)^2}\right)^{{\zeta/2}}\right]}{\theta^{\zeta}m^{\zeta}},\\
&\hspace{-0.3 in}\leq \frac{\tau \mathbb{E}\left[\left(\frac{\sum_{k=1}^{m}\left(X_u(k)X_{u'}(k)-\mu_{uu'}\right)^2}{m}\right)^{{\zeta/2}}\right]}{\theta^{\zeta}m^{\zeta/2}},\ \
 \label{eq:1} 
\end{align}
where the first and second steps follow from Chebyshev's inequality and the Marcinkiewicz-Zygmund inequality~\cite{chow2012probability}, respectively, and $\tau>0$ is a constant independent of $m$. Note that the Marcinkiewicz-Zygmund requires $\mathbb{E}[\left(X_u(k)X_{u'}(k)-\mu_{uu'}\right)^{\zeta}]<+\infty$ which follows from the Cauchy-Schwarz inequality and the fact that the $\zeta^{th}$ moments of $X_u(k)$ and $X_{u'}(k)$ are finite.

Consider the right-hand side (RHS) of~\eqref{eq:1}. Since $f(x)=x^{{\zeta/2}}$ is a convex function of $x$ when $x>0$, Jensen's inequality yields:
\begin{align}
\nonumber &\left(\frac{{\sum\limits_{k=1}^{m}\left(X_u(k)X_{u'}(k)-\mu_{uu'}\right)^2}}{m}\right)^{{\zeta/2}}\leq \frac{\sum\limits_{k=1}^{m}\left(X_u(k)X_{u'}(k)-\mu_{uu'}\right)^{\zeta}}{m}.
\end{align}
Consequently,~\eqref{eq:1} yields:
\begin{align}
\no \mathbb{P}\left(\bigg{|}\frac{S_{uu'}}{m}-\mu_{uu'}\bigg{|}\geq \theta\bigg{|}\right)&\leq \frac{\tau \mathbb{E}\left[\frac{\sum\limits_{k=1}^{m}\left(X_u(k)X_{u'}(k)-\mu_{uu'}\right)^{\zeta}}{m}\right]}{\theta^{\zeta}m^{\zeta/2}}\\
&=\frac{\tau \mathbb{E}\left[\left(X_u(k)X_{u'}(k)-\mu_{uu'}\right)^{\zeta}\right]}{\theta^{\zeta}m^{\zeta/2}}
\label{A_uu'}.
\end{align}
Note that $\mathbb{E}\left[\left(X_u(k)X_{u'}(k)-\mu_{uu'}\right)^{\zeta}\right]$ on the RHS of~\eqref{A_uu'} is finite for $0\leq \zeta < \infty$, following from the Cauchy-Schwarz inequality and the fact that the $\zeta^{th}$ moments of $X_u(k)$ and $X_{u'}(k)$ are finite.

Also, since $\frac{S_u}{m}-\mu$ has a zero-mean normal distribution with a variance equal to $\frac{\sigma^2}{m}$, we have
\begin{align} 
\mathbb{P}\left(\bigg{|}\frac{S_{u}}{m}-\mu_u\bigg{|}\geq \theta\right)= 1-\text{erf}\left({\frac{\sqrt{m} \theta}{\sqrt{2}\sigma}}\right) \leq e^{-\frac{m\theta^2 }{2\sigma^2}},
\label{A_u}
\end{align}
where the last step is true because $\text{erf}(x) \geq 1-e^{-x^2}$.
Now, if $\theta \to 0$, we have
\begin{align}
\no &\mathbb{P}\left( \bigg{|}\frac{S_{u}}{m}\frac{S_{u'}}{m}-\mu_{u}\mu_{u'}\bigg{|}\geq \theta\right) = \\
\no & \hspace{0.5 in} =\mathbb{P}\left( \bigg{|}\frac{S_{u}}{m}-\mu_{u}\bigg{|}\geq \theta'\right) \mathbb{P}\left( \bigg{|}\frac{S_{u'}}{m}-\mu_{u'}\bigg{|}\geq \theta'\right)\\
 &\hspace{0.5 in}=\left(1-\text{erf}\left({\frac{\sqrt{m} \theta'}{\sqrt{2}\sigma}}\right)\right)^2 \leq e^{-\frac{m\theta'^2 }{\sigma^2}},\ \
\label{S_uu'}
\end{align}
where $\theta'=\frac{\theta}{\mu_u+\mu_u'}$.

Let $m=n^{\lambda}$, $\theta=m^{-\frac{1}{4}}$, and $\zeta=\lceil{\frac{8}{\lambda}}\rceil$. By (\ref{J_uu'}), (\ref{A_uu'}), and (\ref{S_uu'}), the union bound yields 
\begin{align}
\no &\mathbb{P} \left(\bigcup\limits_{u=1}^n \bigcup\limits_{u'=1}^n \mathcal{J}_{uu'} 	\right) \\ 
\no &\leq \frac{\tau \mathbb{E}\left[\left(X_{u}(k)X_{u'}(k)-\mu_{uu'}\right)^{2s(s+1)}\right]}{n^2\left(n^{\frac{\lambda}{4}}\right)^{\lceil{\frac{8}{\lambda}}\rceil}} +n^2e^{\frac{1}{{(\mu_u+\mu_{u'})^2\sigma^2}}{n^{-\frac{\lambda}{2}}}}, \ \
\end{align}	
as a result, we can conclude as $n \to \infty$,
$$\mathbb{P} \left(\bigcup\limits_{u=1}^n \bigcup\limits_{u'=1}^n \mathcal{J}_{uu'} 	\right) \to 0.$$ 

Now, we can conclude with high probability, for all $(u,u') \notin E$ (which means $\rho_{uu'}= 0)$,
\begin{align}
\no |\widetilde{\cov_{uu'}}|\leq m^{-\frac{1}{5}},
\end{align}
as $n \to \infty.$
On the other hand, with high probability, for all $(u,u') \in E$ (which means $\mu_{uu'}-\mu_u\mu_{u'}\neq 0 $), 
\begin{align}
\no |\widetilde{\cov_{uu'}}|\geq m^{-\frac{1}{5}},
\end{align}
as $n \to \infty$.
Consequently, the adversary can reconstruct the association graph of the anonymized version of the data with arbitrarily small error probability. 
\end{proof}

Next, we demonstrate how the adversary can identify group $1$ among all of the groups. Note that this is the key step which speeds up the adversary's algorithm relative to the case where user traces are independent.
\begin{lem}
If for any $\alpha>0$, the adversary obtains $m=n^{\frac{4}{s(s+1)}+\alpha}$ anonymized observations and knows the structure of the association graph, they can identify group $1$ among all of the groups with arbitrarily small error probability.
\label{lemma2}
\end{lem}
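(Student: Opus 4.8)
The plan is to turn the empirical covariances already computed in the proof of Lemma~\ref{lemma1} into a \emph{fingerprint} for each group, and to match the connected components of the reconstructed graph $\widetilde{G}$ against the known correlation structure of $G_1$. The adversary knows the true correlation coefficients $\rho_{uu'}$, and hence the true covariances $\sigma^2\rho_{uu'}$, of every edge of $G_1$; and by Lemma~\ref{lemma1}, with high probability he has both (i) recovered $\widetilde{G}$, whose connected components are exactly the permuted groups $\Pi(G_1),\ldots,\Pi(G_f)$, and (ii) computed, for every anonymized pair $(v,v')$, an empirical covariance $\widetilde{Cov}_{vv'}$ within $O(m^{-\frac14})$ of the true covariance. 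I would then declare an anonymized component to be the image of $G_1$ whenever there is a graph isomorphism onto it under which every edge's empirical covariance lies within $m^{-\frac15}$ of the corresponding known covariance of $G_1$.

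First I would verify \emph{completeness}, i.e.\ that the true image $\Pi(G_1)$ passes this test. This is immediate from Lemma~\ref{lemma1}: since $m^{-\frac14}=o(m^{-\frac15})$, the estimation error on each edge is, with high probability, strictly below the matching threshold $m^{-\frac15}$, so applying the isomorphism induced by $\Pi$ makes every edge of $G_1$ match its image.

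The substance of the argument, and the step I expect to be the main obstacle, is \emph{soundness}: showing that with high probability no other component $\Pi(G_j)$, $j\neq 1$, passes the test. A false match would require the true covariances of some $G_j$ to approximate those of $G_1$ to within $O(m^{-\frac15})$ on every edge under some isomorphism. Here I would exploit that the correlations $\mu_{uu'}$ (and, if needed, the means $\mu_u$) are drawn independently from the continuous, bounded densities $g_\mu$ and $f_\mu$ of \eqref{case1}--\eqref{case2}: for a fixed isomorphism, the probability that all $|E_1|\ge s-1$ edge covariances of $G_j$ land within an interval of width $O(m^{-\frac15})$ around the prescribed values is at most $\left(C\,m^{-\frac15}\right)^{|E_1|}$ for a constant $C$ depending only on $\delta_2$ and $\sigma^2$. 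A union bound over the at most $s!$ isomorphisms and the at most $n$ candidate groups then yields a failure probability of order $n\,s!\,\left(C\,m^{-\frac15}\right)^{|E_1|}$.

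It remains to check that this bound vanishes for $m=n^{\frac{4}{s(s+1)}+\alpha}$, which is exactly where the prescribed scaling enters: $m^{-\frac15}=n^{-\frac15\left(\frac{4}{s(s+1)}+\alpha\right)}$ decays polynomially in $n$, so $n\,s!\,\left(C\,m^{-\frac15}\right)^{|E_1|}\to 0$ provided $|E_1|$ is large enough relative to $s$, which holds for the connected groups in the regime of interest. The delicate point is precisely this interplay among the number of candidate groups (up to $n$), the number of distinguishing parameters (the $|E_1|$ edge correlations, which I would augment with the $s$ node means should the edge count alone be insufficient), and the resolution $m^{-\frac15}$. I would finish by combining the estimation-error event inherited from Lemma~\ref{lemma1} with this parameter-collision event in a single union bound, so that both contributions vanish as $n\to\infty$ and $P(\text{correct identification of group }1)\to 1$.
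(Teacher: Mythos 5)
Your overall architecture (a per-group fingerprint of second-order statistics, a completeness step inherited from the concentration bounds of Lemma~\ref{lemma1}, and a soundness step using the bounded densities in \eqref{case1}--\eqref{case2} plus a union bound over the at most $n/s$ candidate groups and $s!$ permutations) is exactly the paper's strategy. But your quantitative choices do not close the argument, and the gap is precisely at the step you yourself flagged as delicate. First, the matching threshold $m^{-\frac{1}{5}}$ is far too coarse: with $m=n^{\frac{4}{s(s+1)}+\alpha}$, your bound $n\,s!\,\bigl(C\,m^{-\frac{1}{5}}\bigr)^{|E_1|}$ has exponent $1-\frac{|E_1|}{5}\bigl(\frac{4}{s(s+1)}+\alpha\bigr)$, and since $|E_1|\leq \frac{s(s-1)}{2}$ one has $\frac{4|E_1|}{5s(s+1)}\leq \frac{2(s-1)}{5(s+1)}<\frac{2}{5}$, so for small $\alpha$ the bound diverges at least like $n^{3/5}$ \emph{even when $G_1$ is complete}, and adding the $s$ means only relaxes this to requiring $\alpha>\frac{6}{s(s+1)}$. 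Second, your fallback claim that $|E_1|$ is ``large enough relative to $s$ \ldots for the connected groups in the regime of interest'' is false: connectivity only guarantees $|E_1|\geq s-1$, and a tree-structured group makes the fingerprint dimension $|E_1|+s=2s-1$, which is strictly less than what is needed. The lemma must hold for \emph{every} $\alpha>0$, and your union bound fails for all sufficiently small $\alpha$.

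The paper closes both holes simultaneously, and the way it does so explains where the exponent $\frac{4}{s(s+1)}$ comes from. It takes the fingerprint to be the full triangular array $\textbf{P}^{(1)}\in\mathbb{R}^{\frac{s(s+1)}{2}}$ of \emph{all} $s$ means and \emph{all} $\binom{s}{2}$ within-group correlations (not just the edges of $G_1$), and it sets the threshold at $\Delta_n=n^{-\frac{2}{s(s+1)}-\frac{\alpha}{4}}$, i.e.\ barely above the statistical resolution $m^{-\frac{1}{2}}=n^{-\frac{2}{s(s+1)}-\frac{\alpha}{2}}$. Then the collision probability per candidate and permutation is of order $(4\Delta_n)^{\frac{s(s+1)}{2}}\delta_2=O\bigl(n^{-1-\frac{\alpha s(s+1)}{8}}\bigr)$, which beats the factor $\frac{n}{s}\,s!$ in the union bound for every $\alpha>0$; with any fingerprint of dimension $d<\frac{s(s+1)}{2}$ or any polynomially coarser threshold, the exponent $1-d\bigl(\frac{2}{s(s+1)}+\frac{\alpha}{4}\bigr)$ stays positive for small $\alpha$. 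Two consequences for your write-up: (i) completeness can no longer be inherited from the $O(m^{-\frac{1}{4}})$ accuracy of Lemma~\ref{lemma1}, since $\Delta_n\ll m^{-\frac{1}{4}}$; the paper re-derives \eqref{A_uu'} and \eqref{A_u} with $\theta=\Delta_n$, using the Gaussian tail for the means and the Marcinkiewicz--Zygmund/Chebyshev bound with moment order $\zeta$ for the products; and (ii) soundness also needs the empirical fingerprints $\widetilde{\textbf{P}^{(l)}}$ of the \emph{other} groups to concentrate at scale $\Delta_n$, which requires a union bound over all $\frac{n}{s}$ groups and hence forces $\zeta>\lceil\frac{4}{\alpha}\rceil$ so that the polynomial tail $n^{1-\frac{\alpha\zeta}{4}}$ vanishes --- a step your proposal omits, since separation of the true parameter vectors alone does not rule out an empirical fingerprint of group $l$ landing within $\Delta_n$ of $\textbf{P}^{(1)}$.
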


\begin{proof}
Note that there are at most $\frac{n}{s}$ groups of size $s$ which we denote $1, 2, \cdots, \frac{n}{s}$. Without loss of generality, we assume the members of group $1$ are users $\{1,2, \cdots , s\}$.

By \eqref{S}, for all members of group $1$ ($u \in \{1, 2, \cdots, s\}$), the empirical mean $\widetilde{\mu_{\Pi{(u)}}}$ is:
\begin{align}
\widetilde{\mu_{\Pi(u)}}=\frac{S_{u}}{m}.\ \
\label{p-tilde2}
\end{align}
For $i \in \{1, 2, \cdots, s\}$, define vectors $\textbf{P}^{*}_i$ and $\widetilde{\textbf{P}^{*}_i}$ with length $s-i$:
\[\textbf{P}^{*}_i=[\mu_{(i)(i+1)}, \mu_{(i)( i+2)}, \cdots, \mu_{(i)(s)}],\]
\[\widetilde{\textbf{P}^{*}_i}=[\widetilde{\mu_{(i)(i+1)}}, \widetilde{\mu_{(i)( i+2)}}, \cdots,\widetilde {\mu_{(i)(s)}}],\]
and for $i=0$, define
\[\textbf{P}^{*}_0=[\mu_1, \mu_2, \cdots, \mu_s], \ \ \widetilde{\textbf{P}^{*}_0}=[\widetilde{\mu_1}, \widetilde{\mu_2}, \cdots,\widetilde {\mu_s}].\]
Also, define arrays $\textbf{P}^{(1)},\widetilde{\textbf{P}^{(1)}} \in\mathbb{R}^{\frac{s(s+1)}{2}}$ as:
\[\textbf{P}^{(1)}=[\textbf{P}^{*}_0, \textbf{P}^{*}_1, \cdots, \textbf{P}^{*}_s ], \ \ \widetilde{\textbf{P}^{(1)}}=[\widetilde{\textbf{P}^{*}_0}, \widetilde{\textbf{P}^{*}_1}, \cdots, \widetilde{\textbf{P}^{*}_s} ].\]
Let $\Pi_s$ be the set of all permutations on $s$ elements; for $\pi_s \in \Pi_s$, $\pi_s :\{1, 2, \cdots, s\}\to \{1, 2, \cdots, s\}$ is a one-to-one mapping. From~\cite[Equation 6]{ISIT18-longversion}, define
\begin{align}
\mathcal{D}\left(\mathbf{P}^{(1)},\widetilde{\mathbf{P}^{(1)}}\right)=\min\limits_{\pi_s \in \Pi_s}\left\{||\mathbf{P}^{(1)}-\widetilde{\mathbf{P}^{(1)}}_{\pi_s}||_{\infty}\right\}.
\end{align}
Next, we show when $m=cn^{\frac{4}{s(s+1)}+\alpha}$ and $n \to \infty$,
\begin{itemize}
	\item $\mathbb{P}\left(\mathcal{D}\left(\textbf{P}^{(1)}, \widetilde{\textbf{P}^{(1)}}\right)\leq \Delta_n\right) \to 1, $
	\item $\mathbb{P}\left(\bigcup\limits_{l=2}^{\frac{n}{s}} \left\{ \mathcal{D}\left(\textbf{P}^{(1)},\widetilde{\textbf{P}^{(l)}}\right)\leq \Delta_n \right\}\right) \to 0$ ,
\end{itemize}
where $\Delta_n= {n^{-\frac{2}{s(s+1)}-\frac{\alpha}{4}}}$.



First, we prove $\mathcal{D}\left(\textbf{P}^{(1)}, \widetilde{\textbf{P}^{(1)}}\right)\leq \Delta_n$ with high probability. Substituting $\theta= \Delta_n$ in (\ref{A_uu'}) and (\ref{A_u}) yields:
\begin{align}
\no \mathbb{P}\left(\bigg{|}\frac{S_{uu'}}{m}-\mu_{uu'}\bigg{|}\geq \Delta_n\right)&\leq \frac{\tau \mathbb{E}\left[\left(X_{u}(k)X_{u'}(k)-\mu_{uu'}\right)^{\zeta}\right]}{\Delta_n^{\zeta}m^{\zeta/2}}\\
&=\tau \mathbb{E}\left[\left(X_{u}(k)X_{u'}(k)-\mu_{uu'}\right)^{\zeta}\right]{n^{-\frac{\alpha}{4}\zeta }},
\label{eq10}
\end{align}
and
\begin{align}
\mathbb{P}\left( \bigg{|}\frac{S_u}{m}-\mu_{u}\bigg{|}\geq \Delta_n \right) &\leq e^{-\frac{m\Delta_n^{2}}{2\sigma^2}}= e^{-\frac{1}{2\sigma^2}n^{\frac{\alpha}{2}}}.
\label{eq11}
\end{align}
By the union bound,
\begin{align}
\no & \mathbb{P}\left(\mathcal{D}\left(\textbf{P}^{(1)},\widetilde{\textbf{P}^{(1)}}\right)\geq \Delta_n\right)\\
 \no &\hspace{0.0 in} \leq \sum\limits_{u=1}^{s} \mathbb{P}\left(\bigg{|}\frac{S_{u}}{m}-\mu_{u}\bigg{|}\geq \Delta_n\right)+\sum\limits_{u=1}^{s} \sum_{u'=u+1}^{s}\mathbb{P}\left(\bigg{|}\frac{S_{uu'}}{m}-\mu_{uu'}\bigg{|}\geq \Delta_n\right) \\
\no &\hspace{0.0in} = s\mathbb{P}\left(\bigg{|}\frac{S_{u}}{m}-\mu_{u}\bigg{|}\geq \Delta_n\right)+\frac{s(s-1)}{2} \mathbb{P}\left(\bigg{|}\frac{S_{uu'}}{m}-\mu_{uu'}\bigg{|}\geq \Delta_n\right) \\
\no &\hspace{0.0 in}\leq se^{-\frac{1}{2\sigma^2}n^{\frac{\alpha}{2}}} +\frac{s(s-1)}{2}\tau \mathbb{E}\left[\left(X_{u}(k)X_{u'}(k)-\mu_{uu'}\right)^{\zeta}\right]{n^{-\frac{\alpha}{4} \zeta}},\ \
\end{align}
consequently, as $n \to \infty,$
$$\mathbb{P}\left(\mathcal{D}\left(\textbf{P}^{(1)},\widetilde{\textbf{P}^{(1)}}\right)\leq \Delta_n\right) \to 1.$$

Next, we show $$\mathbb{P}\left(\bigcup\limits_{l=2}^{\frac{n}{s}} \left\{ \mathcal{D}\left(\textbf{P}^{(1)},\widetilde{\textbf{P}^{(l)}}\right)\leq \Delta_n \right\}\right) \to 0.$$ Note that by (\ref{case1}) and (\ref{case2}), for all groups other than group $1$,
\begin{align}
\no \mathbb{P}\left( \big{|}\textbf{P}^{(1)} - \textbf{P}^{(l)}\big{|} \leq 2 \Delta_n\right) &\leq (4 \Delta_n)^{\frac{s(s+1)}{2}}\delta_2\\
\no &\leq \delta_2 4^{\frac{s(s+1)}{2}}{n^{-1-\frac{\alpha s(s+1)}{8}}}.\ \
\end{align}
Similarly, for any $\pi_s \in \Pi_s$,
\begin{align}
\no \mathbb{P}\left(\big{|}\textbf{P}^{(1)} - \textbf{P}_{\pi_s}^{(l)}\big{|} \leq 2 \Delta_n\right) \leq \delta_2 4^{\frac{s(s+1)}{2}}{n^{-1-\frac{\alpha s(s+1)}{8}}}.\ \
\end{align}
Thus, as $n \to \infty$, the union bound yields:
\begin{align}
\no &\mathbb{P}\left(\bigcup\limits_{l=2}^{\frac{n}{s}} \left\{ \bigcup\limits_{{\pi_s} \in \Pi_s} \left\{ \big{|}\textbf{P}^{(1)} - \textbf{P}^{(l)}_{\pi_s} \big{|} \leq 2 \Delta_n\right\} \right\}\right) \leq \frac{n}{s}s!\delta_24^{\frac{s(s+1)}{2}}{n^{-1-\frac{\alpha s(s+1)}{8}}}\\
\no &\hspace{1.8 in} =(s-1)!4^{\frac{s(s+1)}{2}}{n^{-\frac{\alpha s(s+1)}{8}}}\to 0.\ \
\end{align}

Thus, with high probability, the distance between each of the $\textbf{P}^{(l)}$'s and $\textbf{P}^{(1)}$ is larger than $2\Delta_n$. Next, we show that, with high probability, each of the $\widetilde{\textbf{P}^{(l)}}$'s is significantly close to $\textbf{P}^{(l)}$. By using the union bound with (\ref{eq10}) and (\ref{eq11}), for $\zeta > \lceil{\frac{4}{\alpha}}\rceil$,
\begin{align}
\no \no &\mathbb{P}\left(\bigcup\limits_{l=2}^{\frac{n}{s}} \left\{ \mathcal{D}\left(\textbf{P}^{(l)},\widetilde{\textbf{P}^{(l)}}\right)\geq \Delta_n \right\}\right) \leq \sum\limits_{l=2}^{\frac{n}{s}} \mathbb{P}\left(\mathcal{D}\left(\textbf{P}^{(l)},\widetilde{\textbf{P}^{(l)}}\right)\geq \Delta_n\right)\\
\no &\leq ne^{-\frac{1}{2\sigma^2}n^{\frac{\alpha}{2}}} +\frac{s-1}{2}\tau \mathbb{E}\left[\left(X_{u}(k)X_{u'}(k)-\mu_{uu'}\right)^{\zeta}\right]{n^{1-\frac{\alpha}{4}\zeta}}\to 0,\ \
\end{align}
as $n \to \infty.$

Consequently, for all $l=2, 3, \cdots, \frac{n}{s}$, $\widetilde{\textbf{P}^{(l)}}$'s are close to $\textbf{P}^{(l)}$'s; thus, for large enough $n$,
\begin{align}
\no \mathbb{P}\left(\bigcup\limits_{l=2}^{\frac{n}{s}} \left\{ \mathcal{D}\left(\textbf{P}^{(1)},\widetilde{\textbf{P}^{(l)}}\right)\leq \Delta_n \right\}\right)\to 0. \ \
\end{align}
Hence, the adversary can successfully identify group $1$ among all of the groups with arbitrarily small error probability. 
\end{proof}

Finally, we show that the adversary can identify all of the members of group $1$ with arbitrarily small error probability.

\begin{lem}
If for any $\alpha>0$, the adversary obtains $m=n^{\frac{4}{s(s+1)}+\alpha}$ anonymized observations, and group $1$ is identified among all the groups, the adversary can identify user $1$ with arbitrarily small error probability.
\label{lemma3}
\end{lem}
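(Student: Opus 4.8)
The plan is to continue directly from Lemma~\ref{lemma2}: once the $s$ anonymized nodes that form the image of group $1$ have been isolated, identifying user $1$ reduces to recovering the correct assignment of the labels $\{1,2,\dots,s\}$ to those nodes, and in particular locating $\Pi(1)$. The permutation-invariant distance $D\!\left(\mathbf{P}^{(1)},\widetilde{\mathbf{P}^{(1)}}\right)$ is attained by some minimizing permutation $\pi_s^{\ast}\in\Pi_s$, and this $\pi_s^{\ast}$ is exactly the de-anonymization map for the group. My goal is therefore to show that, with high probability, $\pi_s^{\ast}$ equals the true permutation, so that the node carrying the empirical fingerprint closest to $\mu_1$ is precisely $\Pi(1)$.

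First I would isolate the empirical mean vector $\widetilde{\textbf{P}^{*}_0}=[\widetilde{\mu_1},\dots,\widetilde{\mu_s}]$ as an \emph{individual} fingerprint: unlike the pairwise covariances, each $\widetilde{\mu_{\Pi(u)}}=S_u/m$ depends on a single node and hence needs no knowledge of the assignment. From the concentration estimate~\eqref{eq11} (that is,~\eqref{A_u} with $\theta=\Delta_n$), a union bound over the $s$ members shows that with probability tending to $1$ every empirical mean lies within $\Delta_n$ of the true mean of the user underlying that node. This is the easy half and parallels the first bullet of Lemma~\ref{lemma2}.

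The hard part is the separation statement: I must show that, with high probability, the $s$ true means $\mu_1,\dots,\mu_s$ of group $1$ are pairwise separated by more than $2\Delta_n$. Because the means are drawn i.i.d.\ from $f_\mu$ with density bounded above by $\delta_2$ in~\eqref{case1}, the difference $\mu_i-\mu_j$ has density at most $\delta_2$, so $P\!\left(|\mu_i-\mu_j|\le 2\Delta_n\right)\le 4\delta_2\Delta_n$; a union bound over the $\binom{s}{2}$ pairs gives a failure probability of order $s^2\Delta_n\to 0$, since $\Delta_n=n^{-\frac{2}{s(s+1)}-\frac{\alpha}{4}}\to 0$. I expect this to be the crux, as it is what makes the fingerprints distinguishable at resolution $\Delta_n$, and it is precisely where the continuity assumption~\eqref{case1} enters.

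Combining the two facts finishes the argument. Under mean separation, for every non-identity $\pi_s$ there is a coordinate of the mean block at which the permuted true signature $\mathbf{P}^{(1)}_{\pi_s}$ differs from $\mathbf{P}^{(1)}$ by more than $2\Delta_n$; since $\widetilde{\mathbf{P}^{(1)}}_{\pi_s}$ is within $\Delta_n$ of $\mathbf{P}^{(1)}_{\pi_s}$, the triangle inequality forces $\big\|\mathbf{P}^{(1)}-\widetilde{\mathbf{P}^{(1)}}_{\pi_s}\big\|_{\infty}>\Delta_n$, whereas the correct permutation achieves $\le\Delta_n$. Hence $\pi_s^{\ast}$ is the true permutation, the node $v$ minimizing $|\widetilde{\mu_v}-\mu_1|$ is $\Pi(1)$ with high probability, and the Gaussian-tail and union-bound estimates drive the total error probability to $0$ as $n\to\infty$. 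The covariance entries of $\mathbf{P}^{(1)}$ remain available to break the measure-zero ties where two means nearly coincide, but the mean separation alone already suffices asymptotically.
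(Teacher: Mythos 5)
Your proposal is correct and follows essentially the same route as the paper's proof: Gaussian tail bounds give concentration of each empirical mean $S_u/m$ within $\Delta_n$ of the true mean, the bounded density $\delta_2$ in~\eqref{case1} gives anti-concentration (separation beyond $2\Delta_n$), and the triangle inequality isolates $\Pi(1)$ as the unique node whose empirical mean lies within $\Delta_n$ of $\mu_1$. The only difference is cosmetic: you prove pairwise separation of all $\binom{s}{2}$ means (recovering the full within-group assignment via the minimizing permutation), while the paper only separates $\mu_2,\dots,\mu_s$ from the single target $\mu_1$ via the sets $\mathcal{B}^{(n)}$ and $\mathcal{C}^{(n)}$, which suffices for the lemma as stated and saves a factor of $s$ in the union bound.
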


\begin{proof}
Define sets $\mathcal{B}^{(n)}$ and $\mathcal{C}^{(n)}$ around $\mu_1$:
\begin{align}
\nonumber \mathcal{B}^{(n)}&= \left\{x \in (0,1); |x-\mu_1| \leq \Delta_n\right\},\\
\nonumber \mathcal{C}^{(n)}&=\left\{x\in (0,1); |x-\mu_1| \leq 2\Delta_n\right\},
\end{align}
where $\Delta_n = {n^{-\frac{2}{s(s+1)}-\frac{\alpha}{4}}}.$

Next, we show that when $m =cn^{\frac{4}{s(s+1)} + \alpha}$ and $n \to \infty$,
\begin{itemize}
	\item $\mathbb{P}\left( \big{|}\frac{S_1}{m}-\mu_{1}\big{|}\leq \Delta_n \right) \to 1.$
	\item $\mathbb{P}\left( \bigcup\limits_{u=2}^{s} \left\{\big{|}\frac{S_u}{m}-\mu_{1}\big{|}\leq \Delta_n\right\}\right) \to 0.$
\end{itemize}
In other words, the adversary examines $\widetilde{\mu_u}$'s which are defined according to (\ref{p-tilde2}) and chooses the only one that belongs to $\mathcal{B}^{(n)}$.

Substituting $\theta= \Delta_n$ in (\ref{A_u}) yields:
\begin{align}
\no \mathbb{P}\left( \big{|}\frac{S_1}{m}-\mu_{1}\big{|}\leq \Delta_n \right) &\geq 1- e^{-\frac{m\Delta_n^{2}}{2\sigma^2}}\\
\no &= 1-e^{-\frac{1}{2\sigma^2}n^{\frac{\alpha}{2}}} \to 1.
\end{align}
Thus, for large enough $n$,
\[\mathbb{P}\left({\widetilde{\mu_{\Pi(1) }}} \in \mathcal{B}^{(n)}\right) \to 1.\]

Next, we show that when $n \to \infty$,
$$\mathbb{P}\left( \bigcup\limits_{u=2}^{s} \left\{\big{|}\frac{S_u}{m}-\mu_{1}\big{|}\leq \Delta_n\right\}\right) \to 0.$$
By (\ref{case1}),
\[ \mathbb{P}\left( \mu_u\in \mathcal{C}^{(n)}\right) < 4 \Delta_n \delta_2.\]
Therefore, the union bound yields:
\begin{align}
\no \mathbb{P}\left( \bigcup\limits_{u=2}^s \left\{\mu_u \in \mathcal{C}^{(n)}\right\} \right) &\leq \sum\limits_{u=2}^s \mathbb{P}\left( \mu_u\in \mathcal{C}^{(n)}\right)\\
\nonumber &\leq 4s \Delta_n \delta_2\\
\no &\leq 4s n^{-\frac{2}{s(s+1)}-{\frac{\alpha}{4}}} \delta_2 \to 0.
\end{align}
Consequently, all $\mu_u$'s are outside of $\mathcal{C}^{(n)}$ with high probability. Next, we prove $\mathbb{P}\left({\widetilde{\mu_{\Pi (u)}}} \in \mathcal{B}^{(n)}\right)$ is small. Observe:
\begin{align}
\no \mathbb{P}\left( \big{|}\frac{S_u}{m}-\mu_{1}\big{|}\leq \Delta_n\right) &= \mathbb{P}\left( \big{|}\frac{S_u}{m}-\mu_{u}\big{|}\geq \Delta_n\right)\\
\no &\leq e^{-\frac{m\Delta_n^{2}}{2\sigma^2}}= e^{-\frac{1}{2\sigma^2}n^{\frac{\alpha}{2}}};\ \
\end{align}
hence, by the union bound, when $n$ is large enough,
\begin{align}
\no \mathbb{P}\left( \bigcup\limits_{u=2}^{s} \left\{\big{|}\frac{S_u}{m}-\mu_{1}\big{|}\leq \Delta_n\right\}\right) \leq se^{-\frac{1}{2\sigma^2}n^{\frac{\alpha}{2}}} \to 0.
\end{align}
Thus, if $m=n^{\frac{4}{s(s+1)}+\alpha}$, there exists an algorithm for the adversary to successfully identify user $1$ among all the users. 
\end{proof}

Next, we present Theorem~\ref{two_state_thm} which follows from Lemmas~\ref{lemma1},~\ref{lemma2}, and~\ref{lemma3}. In this theorem, we determine the required number of observations per user ($m$) for the adversary to break the privacy of each user, in terms of the number of users $(n)$ and size of group to which the user of interest belongs ($s$).

\begin{thm}\label{two_state_thm}
If the adversary knows both the structure of the association graph and the correlation coefficient between users, and $m=cn^{\frac{4}{s(s+1)}+\alpha}$, for any $\alpha > 0$; then, user $1$ has no privacy at time $k$. 
\end{thm}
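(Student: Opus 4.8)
The plan is to obtain the theorem as a direct composition of Lemmas~\ref{lemma1},~\ref{lemma2}, and~\ref{lemma3}, since these three lemmas already supply, under the common hypothesis $m=cn^{\frac{4}{s(s+1)}+\alpha}$, the three successive capabilities the adversary requires. First I would invoke Lemma~\ref{lemma1} to conclude that the adversary reconstructs the (anonymized) association graph correctly, i.e. $P(\widetilde{E}=E)\to 1$. Conditioned on this correct reconstruction, the hypothesis of Lemma~\ref{lemma2} (knowledge of the graph structure) is met, so the adversary identifies group $1$ with vanishing error probability. Conditioned on that correct identification, the hypothesis of Lemma~\ref{lemma3} (group $1$ identified) is met, so the adversary determines user $1$, that is, recovers the index $\Pi(1)$, with vanishing error probability.

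Next I would make the chaining precise via a union bound. Let $\mathcal{E}_1$ be the event that the graph is reconstructed correctly, $\mathcal{E}_2$ the event that group $1$ is correctly identified, and $\mathcal{E}_3$ the event that $\Pi(1)$ is correctly recovered. The three lemmas give $P(\mathcal{E}_1^c)\to 0$, $P(\mathcal{E}_2^c\mid \mathcal{E}_1)\to 0$, and $P(\mathcal{E}_3^c\mid \mathcal{E}_1\cap\mathcal{E}_2)\to 0$. Writing $\mathcal{E}_{\mathrm{fail}}=\mathcal{E}_1^c\cup\mathcal{E}_2^c\cup\mathcal{E}_3^c$ for the event that the adversary fails to recover $\Pi(1)$, the union bound yields
\begin{align}
\no P(\mathcal{E}_{\mathrm{fail}}) \leq P(\mathcal{E}_1^c)+P(\mathcal{E}_2^c\mid\mathcal{E}_1)+P(\mathcal{E}_3^c\mid\mathcal{E}_1\cap\mathcal{E}_2)\to 0,
\end{align}
as $n\to\infty$, so the adversary recovers the correct index $\Pi(1)$ with probability tending to one.

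Finally I would translate correct recovery of $\Pi(1)$ into exact recovery of the data point. By the anonymization model $\textbf{Y}_{\Pi(1)}=\textbf{X}_1$, so on the event $\mathcal{E}_{\mathrm{fail}}^c$ the adversary simply reports $\widetilde{X_1(k)}=Y_{\Pi(1)}(k)=X_1(k)$, which is exact for every $k$. Hence $P_e(1)=P\left(\widetilde{X_1(k)}\neq X_1(k)\right)\leq P(\mathcal{E}_{\mathrm{fail}})\to 0$, and by the definition of no privacy the claim follows.

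The routine parts here are the union bound and the read-off step; there is no fresh concentration estimate to perform, since all the analytic work lives inside the lemmas. The one point that warrants care is the conditioning in the chain: Lemmas~\ref{lemma2} and~\ref{lemma3} presuppose that the preceding step has succeeded (the graph is known, respectively group $1$ is isolated), whereas Lemma~\ref{lemma1} delivers the graph only with high probability rather than with certainty. I would therefore phrase each step as a conditional success probability and absorb the (vanishing) conditioning failures into the single union bound above, which is the cleanest way to avoid a spurious dependence between the stages. It is also worth remarking that the construction is symmetric in the users, so the identical conclusion holds for any fixed user, not only user $1$.
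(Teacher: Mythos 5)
Your proposal is correct and matches the paper's route exactly: the paper gives no separate proof of Theorem~\ref{two_state_thm}, stating only that it ``follows from Lemmas~\ref{lemma1},~\ref{lemma2}, and~\ref{lemma3},'' which is precisely your composition. Your explicit chaining of the three success events via a union bound, and the final observation that recovering $\Pi(1)$ yields $\widetilde{X_1(k)}=Y_{\Pi(1)}(k)=X_1(k)$ exactly (since anonymization is a pure permutation with no obfuscation), simply spells out the details the paper leaves implicit.
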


Lastly, in Theorem~\ref{association_graph}, we consider the case where the adversary knows only the association graph, but not necessarily the correlation coefficients between the users. Similar to the arguments leading to Theorem~\ref{two_state_thm} and~\cite[Theorem 1]{ISIT18-longversion} we show that if $m$ is significantly larger than $n^{\frac{2}{s}}$, then the adversary can successfully break the privacy of the user of interest, i.e., the adversary can find an algorithm to estimate the actual data points of the user with vanishing small error probability. 

\begin{thm}\label{association_graph}
	If the adversary knows the structure of the association graph, and $m=cn^{\frac{2}{s}+\alpha'}$, for any $\alpha' > 0$;
	then, user $1$ has no privacy at time $k$. 
\end{thm}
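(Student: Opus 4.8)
The plan is to reuse the three-stage pipeline behind Theorem~\ref{two_state_thm} — reconstruct the anonymized graph, identify group $1$, then locate user $1$ inside it — adjusting only the middle stage to account for the weaker adversary, who now lacks the pairwise correlations. First I would observe that the threshold here is \emph{larger}: for $s \geq 2$ we have $\frac{2}{s} > \frac{4}{s(s+1)}$, so $m = cn^{\frac{2}{s}+\alpha'}$ exceeds the number of observations assumed in Lemma~\ref{lemma1}. Since the concentration bounds \eqref{A_uu'} and \eqref{A_u} only sharpen as $m$ grows, graph reconstruction carries through verbatim, and the adversary recovers the partition of the anonymized vertices into groups of size $s$.

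The one genuinely different step is the identification of group $1$. In Lemma~\ref{lemma2} the adversary matched the full triangular fingerprint $\textbf{P}^{(1)} \in \mathbb{R}^{s(s+1)/2}$, built from both means and covariances. With the correlations unknown, the only admissible fingerprint is the mean block $\textbf{P}^{*}_0 = [\mu_1, \ldots, \mu_s]$, of dimension $s$. I would therefore set $\Delta_n = n^{-\frac{1}{s}-\frac{\alpha'}{4}}$ and rerun the two-sided argument of Lemma~\ref{lemma2} with the distance $D$ restricted to the mean coordinates. The true group matches with high probability because \eqref{A_u} gives $m\Delta_n^2 = cn^{\frac{\alpha'}{2}} \to \infty$, so $e^{-m\Delta_n^2/(2\sigma^2)} \to 0$ even after a union bound over the $s$ members. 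For the false matches, the $s$ means of any other group are i.i.d.\ draws from $f_\mu \leq \delta_2$, so the chance that all of them land within $2\Delta_n$ of group $1$'s means under some permutation is at most $s!(4\Delta_n)^s\delta_2$; summed over the $\frac{n}{s}$ groups this is $\frac{s!4^s\delta_2}{s}\,n^{-\frac{s\alpha'}{4}} \to 0$.

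Finally I would invoke Lemma~\ref{lemma3} essentially unchanged: that step already uses only the means — the sets $\mathcal{B}^{(n)}$ and $\mathcal{C}^{(n)}$ are built from $\mu_1$ alone — so with the new $\Delta_n$ it still certifies that user $1$ is the unique member of the identified group whose empirical mean lies in $\mathcal{B}^{(n)}$, and the theorem follows by combining the three stages.

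The hard part is not any single estimate but understanding why the group structure is indispensable and why it forces precisely the exponent $\frac{2}{s}$. Using means alone, a direct search for user $1$ among all $n$ users fails, because the expected number of impostors within $\Delta_n$ of $\mu_1$ is of order $n\Delta_n = n^{1-\frac{1}{s}-\frac{\alpha'}{4}} \to \infty$ for $s \geq 2$. The joint fingerprint of the $s$ group-means is what rescues the argument: its false-match probability scales as $n\Delta_n^{s}$, which vanishes exactly when $\Delta_n < n^{-1/s}$, while certifying the true match needs $m\Delta_n^2 \to \infty$. Balancing these two requirements pins $m$ at order $n^{2/s}$. Running the same bookkeeping with a fingerprint of dimension $\frac{s(s+1)}{2}$ reproduces the sharper $\frac{4}{s(s+1)}$ of Theorem~\ref{two_state_thm}, which is the clean way to see that losing the correlation coefficients is precisely losing the off-diagonal fingerprint coordinates.
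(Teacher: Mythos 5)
Your proposal is correct and takes essentially the same route the paper intends: the paper offers no detailed proof of Theorem~\ref{association_graph}, only the remark that it follows ``similar to the arguments leading to Theorem~\ref{two_state_thm}'' and the prior work, which is precisely your plan of rerunning the Lemma~\ref{lemma1}--\ref{lemma3} pipeline with the fingerprint cut down from the $\frac{s(s+1)}{2}$-dimensional mean-and-covariance array to the $s$ mean coordinates. Your bookkeeping is right --- with $\Delta_n = n^{-\frac{1}{s}-\frac{\alpha'}{4}}$ the true-match certificate needs $m\Delta_n^2 = cn^{\frac{\alpha'}{2}} \to \infty$ while the false-match union bound gives $n\Delta_n^s \to 0$, pinning the threshold at $n^{\frac{2}{s}}$ --- and the one cosmetic slip (writing $\delta_2$ where the product bound gives $\delta_2^s$) is the same constant-level simplification the paper itself makes in Lemma~\ref{lemma2} and is asymptotically irrelevant.
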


\section{Discussion}
\label{discussion}
Here, we compare our results with previous work. When the users are independent, the adversary can break the privacy of each user if the number of the adversary's observations per user is $m=n^{2}$ \cite{KeConferance2017} (Case 1 in Figure~\ref{fig:converse2}). However, when the users are dependent, and the adversary knows their association graph (and not the correlation coefficients), each user will have no privacy if $m=n^{\frac{2}{s}}$ (Theorem \ref{association_graph}: Case 2 in Figure~\ref{fig:converse2}). The required number of per-user observations for the adversary to break the privacy of each user reduces further ($m=n^{\frac{4}{s(s+1)}}$) when the adversary has more information: the correlation coefficients between users (Theorem \ref{two_state_thm}: Case 3 in Figure~\ref{fig:converse2}). 
In other words, \emph{the more the adversary knows, the smaller $m$ must be}. Note that smaller $m$ means rapid changes in pseudonyms, which reduces the utility.
We have characterized the significance in the loss of privacy of various degrees of knowledge of user dependencies in this paper.
\begin{figure}
	\centering
	\includegraphics[width=0.9\linewidth]{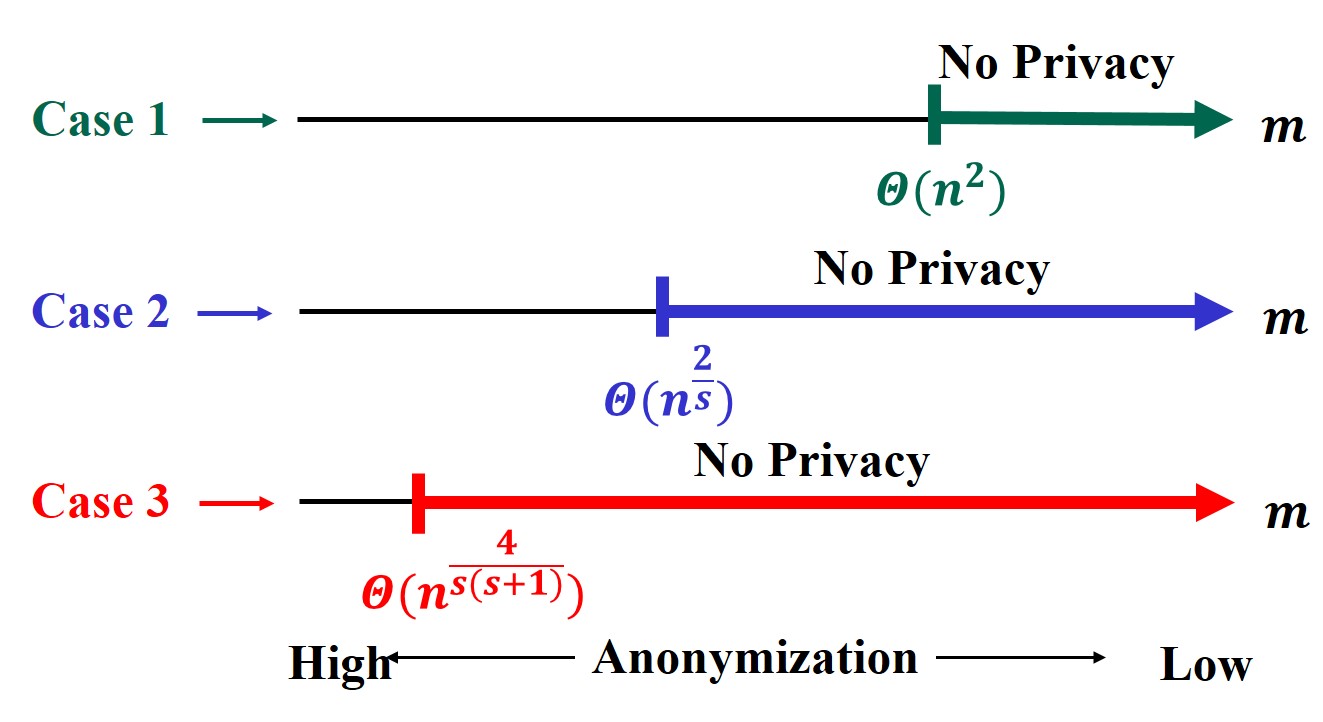}
	\centering
	\caption{Comparing the required number of observations per user for the adversary to break the privacy of each user for three cases: 1) independent users; 2) dependent users, adversary knows only the association graph; 3) dependent users, the adversary knows both the association graph and the correlation coefficient between users.}
	\label{fig:converse2}
\end{figure}



\section{Conclusion}
\label{conclusion}
Many popular applications use traces of user data, e.g., users' location information or medical records, to offer various services to the users. However, revealing user information to such applications put users' privacy at stake, as adversaries can infer sensitive private information about the users such as their behaviors, interests, and locations. In this paper, anonymization is employed to protect users' privacy when data traces of each user observed by the adversary are governed by i.i.d.\ Gaussian sequences, and data traces of different users are dependent. An association graph is employed to show the dependency between users, and both the structure of this association graph and the nature of the dependency between users are known to the adversary. We show that dependency is a significant detriment to the privacy of users. In comparison to the case in which data traces of different users are independent, here we must use a stronger anonymization technique by drastically increasing the rate at which user pseudonyms are changed, which degrades system utility.

\appendices

\bibliographystyle{IEEEtran}
\bibliography{REF,dennis_privacy,Refrences}
%
%
%
%


\end{document}